\documentclass[12pt]{article}

\usepackage{amsmath,amssymb,amsfonts}
\usepackage{algorithm}
\usepackage{algorithmic}
\usepackage{graphicx}
\usepackage{textcomp}
\usepackage{bm}
\usepackage{array}
\usepackage{booktabs}
\usepackage{url}
\usepackage{float}
\usepackage{geometry}
\usepackage{hyperref}
\usepackage{amsthm}
\newtheorem*{proof*}{Proof}
\renewenvironment{proof}{\noindent\textit{Proof.}\;}{\hfill$\square$\par\medskip}

\newcounter{paperalgorithm}
\renewcommand{\thepaperalgorithm}{\arabic{paperalgorithm}}
\newenvironment{paperalgorithm}[2]{%
\par\medskip
\refstepcounter{paperalgorithm}\label{#2}%
\noindent\begin{minipage}{\linewidth}
\hrule\vspace{0.5ex}
\noindent\textbf{Algorithm~\thepaperalgorithm. #1}
\vspace{0.5ex}
\small
\begin{algorithmic}[1]
}{%
\end{algorithmic}
\vspace{0.5ex}\hrule
\end{minipage}
\medskip
}

\newcounter{definition}
\newcounter{remark}
\newcounter{lemma}
\newcounter{theorem}
\newcounter{corollary}

\newenvironment{definition}[1][]{%
\refstepcounter{definition}\par\medskip\noindent\textbf{Definition~\thedefinition.\if\relax\detokenize{#1}\relax\else\ (#1)\fi}\ \itshape
}{\par\medskip\normalfont}

\newenvironment{remark}[1][]{%
\refstepcounter{remark}\par\medskip\noindent\textbf{Remark~\theremark.\if\relax\detokenize{#1}\relax\else\ (#1)\fi}\ \itshape
}{\par\medskip\normalfont}

\newenvironment{lemma}[1][]{%
\refstepcounter{lemma}\par\medskip\noindent\textbf{Lemma~\thelemma.\if\relax\detokenize{#1}\relax\else\ (#1)\fi}\ \itshape
}{\par\medskip\normalfont}

\newenvironment{theorem}[1][]{%
\refstepcounter{theorem}\par\medskip\noindent\textbf{Theorem~\thetheorem.\if\relax\detokenize{#1}\relax\else\ (#1)\fi}\ \itshape
}{\par\medskip\normalfont}

\newenvironment{corollary}[1][]{%
\refstepcounter{corollary}\par\medskip\noindent\textbf{Corollary~\thecorollary.\if\relax\detokenize{#1}\relax\else\ (#1)\fi}\ \itshape
}{\par\medskip\normalfont}

\title{A Unified Constant-Time Switch Rule for Constructing\\
Edge-Disjoint Hamiltonian Cycles in Gaussian Networks}

\author{Bader Albader\\
\small Department of Computer Science, Faculty of Science,\\
\small Kuwait University, Safat 13060, Kuwait\\
\small \texttt{albader@cs.ku.edu.kw}}

\date{}

\begin{document}

\maketitle

\begin{abstract}
Gaussian networks are degree-four symmetric interconnection networks defined over residue classes of Gaussian integers.  Earlier work showed that when the generator $\alpha=a+bi$ satisfies $\gcd(a,b)=1$, the real and imaginary dimensions directly form two edge-disjoint Hamiltonian cycles.  A later construction extended the result to the non-coprime case $\gcd(a,b)=d>1$, but its proof used long node-sequence tables and separate odd/even cases for $d$.  This paper gives a unified closed-form construction that covers both $d=1$ and $d>1$, and also covers both odd and even $d$, without separate case tables.  In the rectangular representation with $d$ rows and $r=(a^2+b^2)/d$ columns, the construction uses a constant-time local switch rule for each $q=1,2,\ldots,d-1$ at column $a_q=q-1$.  Each switch removes two horizontal edges and inserts two vertical edges.  The switched horizontal structure forms the first Hamiltonian cycle, while its edge-complement in the Gaussian network forms the second Hamiltonian cycle.  Thus, the full edge set is partitioned into two edge-disjoint Hamiltonian cycles.  The construction requires $O(d)$ switch-generation time and $O(N)$ time to list the two cycles, where $N=a^2+b^2$.  Exhaustive validation for all $1\leq a\leq b\leq 100$, excluding only the degenerate $N=2$ network, and large-scale validation up to $N=3{,}250{,}000$ confirm the construction.
\end{abstract}

\noindent\textbf{Keywords:} Edge-disjoint Hamiltonian cycles; Gaussian networks; Gaussian integers; interconnection networks; Hamiltonian cycle; parallel processing; successor rule; algebraic networks

\noindent\textbf{MSC 2020:} 05C45; 05C38; 05C85; 68M10
\medskip

\section{Introduction}
\label{sec:introduction}

Gaussian networks are algebraic interconnection networks defined over quotient rings of Gaussian integers.  They are regular degree-four networks whose vertices are residue classes modulo a Gaussian integer $\alpha=a+bi$, and whose links correspond to adding or subtracting the Gaussian units $1$ and $i$ \cite{Martinez2008,FlahiveBose2010}.  Because they are algebraically regular and symmetric, Gaussian networks are useful for modeling toroidal-like interconnection networks with compact routing and wrap-around structure.

Hamiltonian cycles are important in interconnection networks because a Hamiltonian cycle embeds a spanning ring through all processors.  Such rings are useful for broadcasting, all-to-all communication, and load-balanced message circulation.  Multiple edge-disjoint Hamiltonian cycles (EDHCs) are even more useful because they provide independent spanning rings, allowing communication traffic to be distributed and improving tolerance to edge faults \cite{BaeBose2003,BarthRaspaud1994,Hung2010Twisted,Hung2015Augmented,Pai2023FLTQ,Pai2026BCube}.  Recent work continues to use EDHCs as a structural tool for fault-tolerant broadcasting and data-center/interconnection-network design \cite{Pai2023FLTQ,Pai2026BCube,Yang2023Cayley}.

For Gaussian networks, the EDHC problem naturally splits according to
\begin{equation}
    d=\gcd(a,b).
\end{equation}
When $d=1$, the real dimension alone forms a Hamiltonian cycle and the imaginary dimension alone forms another Hamiltonian cycle.  This gives two edge-disjoint Hamiltonian cycles immediately \cite{FlahiveBose2010}.  When $d>1$, however, the real-dimension edges decompose into $d$ disjoint cycles, and the imaginary-dimension edges also decompose into $d$ disjoint cycles.  The non-coprime case therefore requires a splicing mechanism to join these smaller cycles.  The non-coprime case was previously solved by Albader and Bose using explicit node-sequence tables for odd and even $d$ \cite{AlbaderBoseEDHC}.  The present paper is written to be self-contained: all definitions, switch rules, correctness proofs, and algorithms needed for the unified construction are included here.  The earlier construction established existence but was relatively long and difficult to implement directly.

This paper provides a simpler and unified solution.  It shows that one closed-form switching rule handles all cases:
\begin{equation}
    q=1,2,\ldots,d-1,\qquad a_q=q-1.
\end{equation}
For $d=1$, there are no switches and the construction reduces exactly to the classical coprime case.  For $d>1$, the same formula splices the $d$ horizontal cycles into one Hamiltonian cycle and simultaneously splices the complementary vertical cycles into the second Hamiltonian cycle.  Thus, the new construction unifies the earlier $d=1$ result and the later $d>1$ result in one algorithmic framework.  From a graph-theoretic perspective, the result gives an explicit decomposition of the full edge set of every nondegenerate Gaussian network into two Hamiltonian 2-factors.

The main contributions are:
\begin{itemize}
    \item a unified EDHC construction for nondegenerate Gaussian networks that covers $d=1$, $d>1$, odd $d$, and even $d$;
    \item a closed-form constant-time switch rule $a_q=q-1$, replacing the previous odd/even case tables;
    \item complete algorithms for constructing the switch set and traversing the two Hamiltonian cycles;
    \item a proof that the switched graph is one Hamiltonian cycle, that its edge-complement is also one Hamiltonian cycle, and that the two cycles are edge-disjoint;
    \item exhaustive validation for all $1\leq a\leq b\leq 100$ and large-scale validation up to $3{,}250{,}000$ vertices.
\end{itemize}

\section{Related Work and Positioning}
\label{sec:related}

Gaussian networks were introduced and studied as algebraic models of toroidal networks in \cite{Martinez2008}.  The topology of Gaussian and Eisenstein--Jacobi interconnection networks was further developed in \cite{FlahiveBose2010}, including distance, routing, and Hamiltonian-cycle properties.  In particular, when $\gcd(a,b)=1$, the two natural dimensions of $G_{a+bi}$ give two EDHCs.

The non-coprime Gaussian case was later addressed by Albader and Bose \cite{AlbaderBoseEDHC}.  In that case, the $+1$ edges form $d=\gcd(a,b)$ disjoint cycles and the $+i$ edges form another $d$ disjoint cycles.  The earlier non-coprime construction spliced those cycles but required separate odd/even cases and long tables of node sequences.

The present paper differs from both earlier steps.  It does not only cover $d=1$, and it does not only cover $d>1$.  Instead, it gives a single algorithm that covers both.  Table~\ref{tab:positioning} summarizes the difference.

\begin{table}[H]
\caption{Positioning of the present construction.}
\label{tab:positioning}
\centering
\footnotesize
\begin{tabular}{p{0.28\linewidth}|p{0.18\linewidth}|p{0.38\linewidth}}
\hline
Work & Case covered & Construction style\\
\hline
Flahive and Bose \cite{FlahiveBose2010} & $d=1$ & Direct dimension cycles\\
Albader and Bose \cite{AlbaderBoseEDHC} & $d>1$ & Odd/even sequence tables\\
This paper & $d=1$ and $d>1$ & Unified closed-form switch rule\\
\hline
\end{tabular}
\end{table}

Outside Gaussian networks, EDHCs have been studied in several interconnection topologies, including hypercubes and $k$-ary $n$-cubes \cite{BaeBose2003}, butterfly graphs \cite{BarthRaspaud1994}, twisted and locally twisted cubes \cite{Hung2010Twisted,Hung2010Locally}, augmented cubes \cite{Hung2015Augmented}, folded locally twisted cubes and folded crossed cubes \cite{Pai2023FLTQ}, balanced hypercubes \cite{Cheng2024Balanced}, and BCube data-center networks \cite{Pai2026BCube}.  Recent graph-theoretic work has also studied EDHCs in symmetric and Cayley-type graphs \cite{Yang2023Cayley}.  These works show that EDHC constructions remain active and relevant for both network theory and communication applications.

\section{Preliminaries}
\label{sec:preliminaries}

Let
\begin{equation}
    \mathbb{Z}[i]=\{x+yi:x,y\in\mathbb{Z}\}
\end{equation}
be the ring of Gaussian integers.  The norm of $\alpha=a+bi$ is
\begin{equation}
    N(\alpha)=a^2+b^2.
\end{equation}

\begin{definition}[Gaussian network]
Let $\alpha=a+bi\ne 0$.  The Gaussian network generated by $\alpha$ is the graph
\begin{equation}
    G_\alpha=(V,E),
\end{equation}
where
\begin{equation}
    V=\mathbb{Z}[i]/(\alpha),
\end{equation}
and two vertices $\beta,\gamma\in V$ are adjacent if and only if
\begin{equation}
    \beta-\gamma\equiv \pm 1 \quad \text{or} \quad \pm i \pmod{\alpha}.
\end{equation}
\end{definition}

The number of vertices is
\begin{equation}
    |V|=N=a^2+b^2.
\end{equation}
Throughout the paper, assume $0<a\leq b$.  Let
\begin{equation}
    d=\gcd(a,b),\qquad r=\frac{a^2+b^2}{d}.
\end{equation}

\subsection{Rectangle Representation}

The rectangle representation uses the complete residue system
\begin{equation}
    S=\{(x,y):0\leq x<r,\;0\leq y<d\}.
\end{equation}
Thus, each vertex is represented by a pair $(x,y)$, with $x$ taken modulo $r$ and $y$ restricted to $0,\ldots,d-1$.

Let $u,v\in\mathbb{Z}$ satisfy
\begin{equation}
    ua+vb=d.
\end{equation}
Define
\begin{equation}
    n\equiv av-bu \pmod r.
\end{equation}
Then the rectangle representation has the wrap relation
\begin{equation}
    -i\equiv n+(d-1)i \pmod{\alpha}.
\end{equation}
Consequently,
\begin{equation}
    di\equiv -n \pmod{\alpha}.
\end{equation}

The moves in rectangle coordinates are:
\begin{align}
    E(x,y)&=(x+1 \bmod r,y),\\
    W(x,y)&=(x-1 \bmod r,y),\\
    N_i(x,y)&=
    \begin{cases}
        (x,y+1), & 0\leq y<d-1,\\
        (x-n \bmod r,0), & y=d-1,
    \end{cases}\\
    S_i(x,y)&=
    \begin{cases}
        (x,y-1), & 0<y\leq d-1,\\
        (x+n \bmod r,d-1), & y=0.
    \end{cases}
\end{align}
Here $E,W,N_i,S_i$ denote east, west, north, and south moves.

\begin{remark}
It is important that the $x$-coordinate changes by $-n$ only when the $+i$ move wraps from row $d-1$ to row $0$.  It is not correct to subtract $n$ at every vertical step.
\end{remark}

\begin{lemma}
\label{lem:n-multiple-d}
The wrap shift $n$ is divisible by $d$.
\end{lemma}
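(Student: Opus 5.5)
Since $d=\gcd(a,b)$, we may write $a=da'$ and $b=db'$ with integers $a',b'$. The plan is to show that $d$ divides the integer representative $av-bu$, and then to check that this divisibility survives reduction modulo $r$.

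\textbf{Divisibility of the representative.} By definition $n\equiv av-bu \pmod r$, where $ua+vb=d$. Substituting $a=da'$ and $b=db'$ gives
\begin{equation}
    av-bu=d(a'v-b'u),
\end{equation}
so the integer $av-bu$ is a multiple of $d$.

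\textbf{Compatibility with reduction modulo $r$.} Next we check that $d\mid r$. We have
\begin{equation}
    r=\frac{a^2+b^2}{d}=d\,(a'^2+b'^2),
\end{equation}
so $r$ is a multiple of $d$. Now take any representative $n=(av-bu)+kr$ with $k\in\mathbb{Z}$, in particular the one in $\{0,\ldots,r-1\}$. Both summands are divisible by $d$, hence so is $n$. This shows the statement is independent of which residue representative is chosen.

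\textbf{The only point needing care.} The argument requires $d\mid r$, which in turn uses the fact that $d^2$ divides $a^2+b^2$. This holds because $d$ divides both $a$ and $b$. Once that is noted, the rest of the argument is immediate.
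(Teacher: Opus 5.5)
Your proof is correct and matches the paper's own argument: factor $a=da'$, $b=db'$ to get $d\mid av-bu$, then use $r=d(a'^2+b'^2)$ so that divisibility by $d$ survives reduction modulo $r$. Your explicit handling of an arbitrary representative $(av-bu)+kr$ only spells out the final step that the paper states in one line.
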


\begin{proof}
Write $a=da_1$ and $b=db_1$.  Then
\begin{equation}
    n\equiv av-bu=d(a_1v-b_1u) \pmod r.
\end{equation}
Also,
\begin{equation}
    r=\frac{a^2+b^2}{d}=d(a_1^2+b_1^2),
\end{equation}
so $r$ is divisible by $d$.  Hence the chosen residue $n$ modulo $r$ is a multiple of $d$.
\end{proof}

\section{The Unified Switching Construction}
\label{sec:construction}

The construction begins with all horizontal edges.  These edges form $d$ disjoint row cycles, each of length $r$.  The construction then performs $d-1$ local switches.

For $d=1$, the switch set is empty.  The horizontal cycle and vertical cycle are already Hamiltonian, so the construction reduces to the known coprime case.

For $d>1$, define for every
\begin{equation}
    q=1,2,\ldots,d-1
\end{equation}
the switching column
\begin{equation}
    a_q=q-1.
\end{equation}
Define the four switch vertices
\begin{align}
    A_q&=(a_q,q)=(q-1,q),\\
    B_q&=(a_q+1,q)=(q,q),\\
    C_q&=A_q+i,\\
    D_q&=B_q+i.
\end{align}
The switch removes the two horizontal edges
\begin{equation}
    A_qB_q,\qquad C_qD_q,
\end{equation}
and inserts the two vertical edges
\begin{equation}
    A_qC_q,\qquad B_qD_q.
\end{equation}

Let $R$ be the set of removed horizontal edges and $S$ the set of inserted vertical edges:
\begin{align}
    R&=\{A_qB_q,C_qD_q:1\le q\le d-1\},\\
    S&=\{A_qC_q,B_qD_q:1\le q\le d-1\}.
\end{align}
Let $E_h$ be the set of all horizontal edges and $E_v$ the set of all vertical edges.  Define
\begin{equation}
    E(H_1)=(E_h\setminus R)\cup S,
\end{equation}
and
\begin{equation}
    E(H_2)=(E_v\setminus S)\cup R.
\end{equation}
Equivalently, $H_2$ is the edge-complement of $H_1$ inside $G_\alpha$, using precisely the graph edges not selected by $H_1$.  Intuitively, $G_\alpha$ is $4$-regular and $H_1$ uses exactly two incident edges at every vertex; the remaining two incident edges at each vertex form the complementary $2$-regular graph $H_2$.

Figure~\ref{fig:local-switch} illustrates the local switch operation.  The dashed horizontal edges are removed from $H_1$ and become part of $H_2$, while the solid vertical switch edges are inserted into $H_1$ and removed from $H_2$.

\begin{figure}[!t]
\centering
\includegraphics[width=0.55\linewidth]{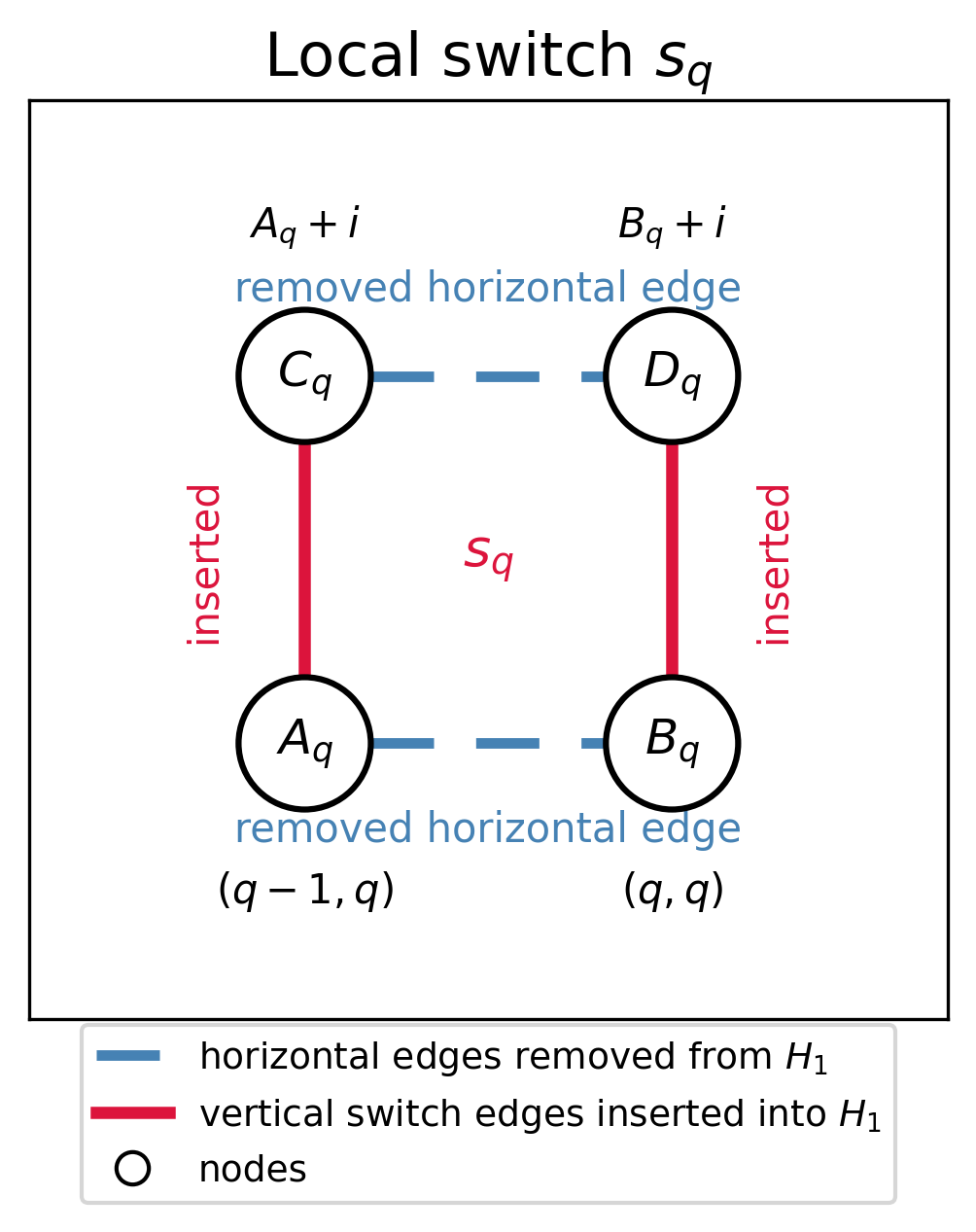}
\caption{Local switch $s_q$.  The two dashed horizontal edges $A_qB_q$ and $C_qD_q$ are removed from the horizontal construction of $H_1$ and are assigned to $H_2$.  The two solid vertical edges $A_qC_q$ and $B_qD_q$ are inserted into $H_1$ and excluded from $H_2$.}
\label{fig:local-switch}
\end{figure}

The global effect of the same switch family is summarized in Fig.~\ref{fig:concept}.  In the horizontal structure, the switches splice the $d$ row cycles into one Hamiltonian cycle $H_1$.  In the complementary vertical structure, the corresponding exchanged edges splice the vertical cycles into one Hamiltonian cycle $H_2$.

\begin{figure}[!t]
\centering
\includegraphics[width=\textwidth]{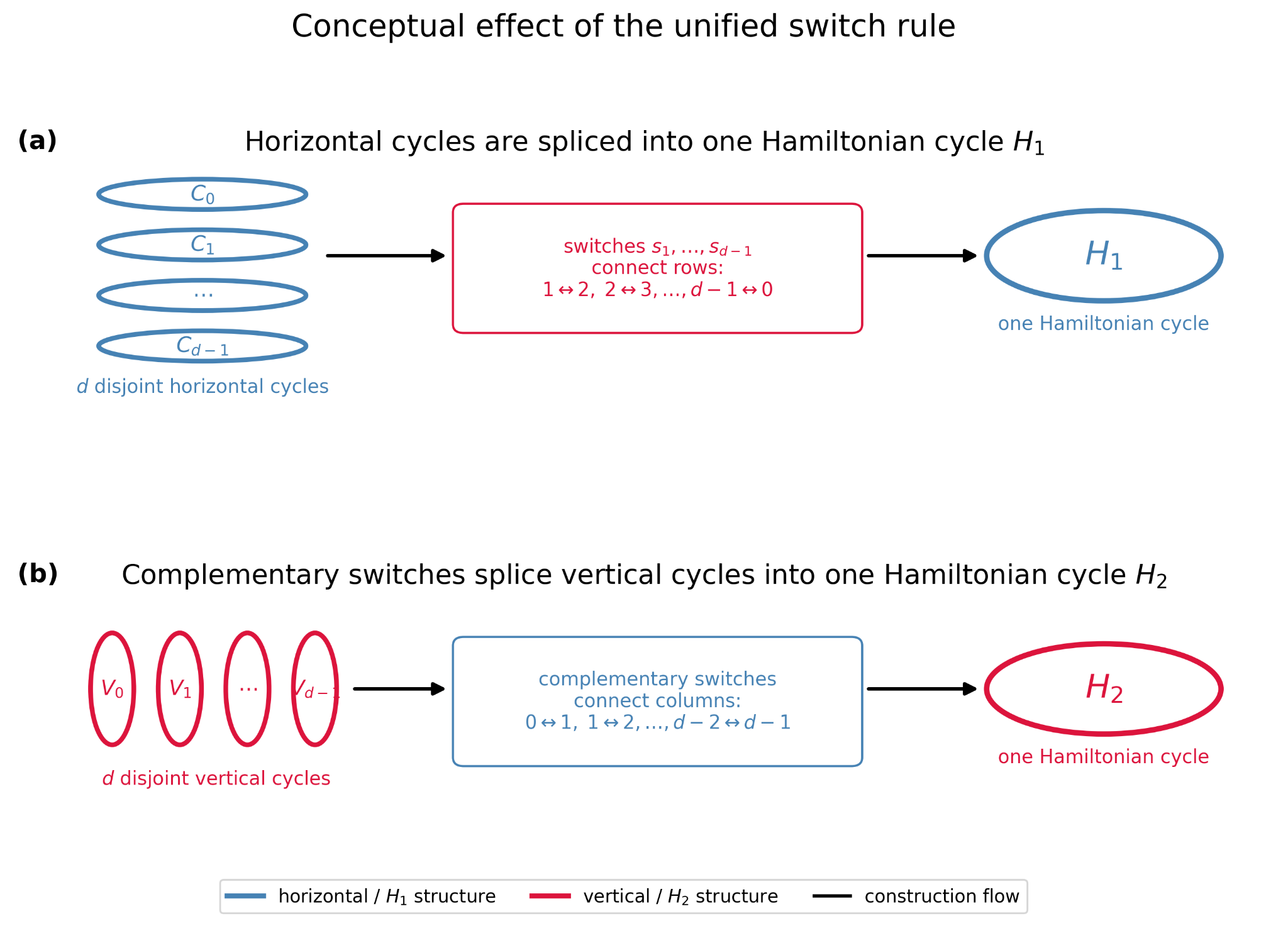}
\caption{Conceptual effect of the unified switch rule.  The switches $s_1,\ldots,s_{d-1}$ splice the horizontal row cycles into the Hamiltonian cycle $H_1$, while the complementary exchanged edges splice the vertical cycles into the Hamiltonian cycle $H_2$.}
\label{fig:concept}
\end{figure}

\section{Correctness Proof}
\label{sec:proof}

We first state a basic cycle-splicing fact.

\begin{lemma}[Two-edge splicing]
\label{lem:splicing}
Let $C_1$ and $C_2$ be two vertex-disjoint cycles.  Suppose $xy$ is an edge of $C_1$ and $uv$ is an edge of $C_2$.  Removing $xy$ and $uv$ and adding $xu$ and $yv$ produces one cycle containing all vertices of $C_1\cup C_2$, provided that the added edges are valid graph edges.
\end{lemma}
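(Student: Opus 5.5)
The plan is to convert each cycle into a path by deleting its chosen edge, and then join the two paths at both ends with the added edges. Traverse $C_1$ starting at $y$ and going around the cycle away from $x$, so that the walk ends at $x$. Deleting the edge $xy$ leaves a Hamiltonian path $P_1$ of $C_1$ with endpoints $y$ and $x$. In the same way, deleting $uv$ from $C_2$ leaves a Hamiltonian path $P_2$ of $C_2$ with endpoints $u$ and $v$. We may orient $P_2$ so that it runs from $u$ to $v$. Because $C_1$ and $C_2$ are vertex-disjoint, $P_1$ and $P_2$ share no vertices.

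Next, concatenate as follows: follow $P_1$ from $y$ to $x$, take the added edge $xu$, follow $P_2$ from $u$ to $v$, and return to the start by the added edge $vy$. This closed walk visits every vertex of $C_1\cup C_2$ exactly once, since each path is Hamiltonian on its own cycle and the two vertex sets are disjoint. So it is a single cycle on the union. The hypothesis that $xu$ and $yv$ are valid graph edges is what makes this walk a cycle of the graph.

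It remains to confirm that the result is a simple cycle, and not a multigraph or a structure that revisits a vertex. The key points are as follows.
\begin{itemize}
\item Every vertex keeps degree $2$. Each of $x,y,u,v$ loses one edge and gains one, and all other vertices are untouched.
\item The edges $xu$ and $yv$ are distinct from each other and from the remaining cycle edges, because each joins a vertex of $C_1$ to a vertex of $C_2$.
\item The resulting $2$-regular graph is connected, because the walk above traverses it entirely.
\end{itemize}
A connected $2$-regular finite graph is a single cycle.

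The main subtlety is the pairing of endpoints. The construction works because the edges join $x$ to $u$ and $y$ to $v$, and within each path the two endpoints are the two ends of the deleted edge. The alternative pairing would fail only if we paired within the same cycle, which the hypothesis excludes. Since the two cycles are disjoint and undirected, either cross-pairing yields one cycle, so no orientation condition is needed. Degenerate cases such as cycles of length $2$ do not arise, because cycles here have length at least $3$; the argument also holds formally in that case.
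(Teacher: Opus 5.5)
Your proof is correct and follows the same route as the paper. You delete $xy$ and $uv$ to obtain spanning paths of $C_1$ and $C_2$ with endpoints $\{y,x\}$ and $\{u,v\}$, then close them end-to-end with $xu$ and $yv$; your extra checks (degree two everywhere, distinctness of the new edges, connectivity) just make explicit the paper's remark that no vertex is removed or repeated.
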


\begin{proof}
Removing one edge from each cycle turns $C_1$ and $C_2$ into two paths.  The two added edges connect the two paths end-to-end, producing one cycle.  No vertex is removed or repeated.
\end{proof}

\begin{lemma}
\label{lem:switches-valid-disjoint}
For $1\leq q\leq d-1$, each local switch is valid, and no two switches use the same removed or inserted edge.
\end{lemma}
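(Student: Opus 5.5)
We first check that each switch is well defined. The row index $q$ satisfies $1\le q\le d-1$, so $q+1\le d$. When $q\le d-2$ the $+i$ move does not wrap, and $C_q=(q-1,q+1)$, $D_q=(q,q+1)$. When $q=d-1$ the move wraps, and $C_q=(q-1-n \bmod r,0)$, $D_q=(q-n \bmod r,0)$. In both cases $D_q=E(C_q)$, because the wrap shift $-n$ is the same for both vertices. Hence $C_qD_q$ is a horizontal edge.

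By construction $A_qC_q$ and $B_qD_q$ are vertical edges, and $A_qB_q$ is horizontal. These edges are genuine and not degenerate: $r=d(a_1^2+b_1^2)\ge 2d\ge 4$ when $d>1$ (the degenerate $N=2$ case is excluded), so $A_q\ne B_q$ and no horizontal edge is a loop or a doubled edge. Since $d\ge 2$, the vertical moves $N_i$ do not collapse either. Thus every removed edge lies in $E_h$ and every inserted edge lies in $E_v$.

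Next we show that distinct switches use distinct edges. The edge $A_qB_q$ lies in row $q$ between columns $q-1$ and $q$. For $q\le d-2$, the edge $C_qD_q$ lies in row $q+1$ between the same columns $q-1$ and $q$. For $q=d-1$, it lies in row $0$.

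\textbf{Removed edges.} Comparing $A_qB_q$ with $A_{q'}B_{q'}$, rows differ when $q\ne q'$. Comparing $C_qD_q$ with $C_{q'}D_{q'}$ for $q,q'\le d-2$, rows $q+1\ne q'+1$ differ. If one of the two is $q=d-1$, its edge lies in row $0$ while the other lies in a row $\ge 2$. Comparing $A_{q'}B_{q'}$ with $C_qD_q$, a coincidence would need row $q'=q+1$ and columns $\{q'-1,q'\}=\{q-1,q\}$, i.e. $q'=q$, which is a contradiction. Row $0$ never carries an $A_qB_q$ edge, so the $q=d-1$ case is also disjoint.

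One subtlety remains: horizontal edges are taken modulo $r$, so column pairs could agree mod $r$. Since all columns involved are at most $d-1<r$, the only concern is the wrapped pair in row $0$, and this case is harmless because row $0$ holds only that one removed edge.

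\textbf{Inserted edges.} The edge $A_qC_q$ is the $+i$ edge leaving $(q-1,q)$, and $B_qD_q$ is the $+i$ edge leaving $(q,q)$. A $+i$ edge is determined by its lower endpoint (row $y$, with $N_i$ applied), and the tail vertices $(q-1,q)$ and $(q,q)$ are pairwise distinct over all $q$ since their rows and columns differ. Hence all inserted edges are distinct. This needs $d\ge 3$, or else an argument that a vertical edge cannot be traversed as $+i$ from two different tails. When $d=2$ the vertical cycles could have length $2$ and create parallel edges, so we check that $n$ is not $0$ in that situation, or treat each edge as a pair (tail, direction).

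We expect this last point, handling the vertical wrap and the possible coincidence of edges viewed from opposite ends when $d$ is small, to be the main obstacle. We will handle it by identifying each vertical edge with its unique lower endpoint under $N_i$.
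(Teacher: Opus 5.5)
Your proposal follows the same route as the paper's proof. You show $D_q=E(C_q)$ because both $+i$ moves share the same wrap shift. You separate the removed edges by row index, and by column when comparing $A_{q+1}B_{q+1}$ with $C_qD_q$. You separate the inserted edges by their lower endpoints. Your treatment of the removed edges, including the mod-$r$ check and the row-$0$ check for $q=d-1$, is correct and more careful than the paper's one-line assertion.

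The one step you leave open is the claim that a vertical edge determines its lower endpoint under $N_i$. The paper uses this claim silently too, but you should prove it rather than defer it. Your fallback of treating edges as (tail, direction) pairs is not admissible: $R$, $S$, $E(H_1)$ and $E(H_2)$ are sets of \emph{undirected} edges of $G_\alpha$, and distinctness is needed in that sense. The fix is one line. Suppose $\{v,v+i\}=\{w,w+i\}$ with $v\neq w$. Then $w=v+i$ and $v=w+i$, so $2i\equiv 0\pmod{\alpha}$. Hence $\alpha\mid 2$ and $N(\alpha)\mid 4$. But $a^2+b^2$ with $a,b\ge 1$ is never $1$ or $4$, and $N=2$ is excluded.

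Equivalently, Lemma~\ref{lem:vertical-index} gives $\gcd(n',r')=1$ with $r'\ge 2$. So $n\not\equiv 0\pmod r$, and every vertical cycle has length $r\ge 4$. Your worry that vertical cycles might have length $2$ when $d=2$ therefore never materializes. For this particular lemma you do not even need that fact. When $d=2$ there is a single switch, and its two inserted edges each have exactly one endpoint in row $1$, namely $(0,1)$ and $(1,1)$; hence they are distinct regardless of $n$. With that line added, your argument is complete.
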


\begin{proof}
By construction, $A_q$ and $B_q$ differ by $+1$, so $A_qB_q$ is horizontal.  Also, $C_q=A_q+i$ and $D_q=B_q+i$.  Therefore, $C_q$ and $D_q$ differ by $+1$, so $C_qD_q$ is horizontal.  The edges $A_qC_q$ and $B_qD_q$ are vertical by definition.

The removed edge $A_qB_q$ lies on row $q$, while $C_qD_q$ lies on row $q+1\pmod d$.  The formula $a_q=q-1$ gives distinct switch positions across the row adjacencies.  The inserted vertical edges have lower endpoints $A_q$ and $B_q$, which are distinct for different $q$.  Hence the inserted edges are distinct.  The only row that appears in two consecutive switches is handled at different horizontal columns, so the removed horizontal edges are also distinct.
\end{proof}

\begin{lemma}
\label{lem:h1-hamiltonian}
For $d>1$, the graph $H_1$ is a Hamiltonian cycle.
\end{lemma}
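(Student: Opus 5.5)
We prove Lemma~\ref{lem:h1-hamiltonian} by applying Lemma~\ref{lem:splicing} $d-1$ times, once per switch, in the order $q=1,2,\ldots,d-1$. We track which rows each switch touches. For $1\le q\le d-2$ we have $q+1\le d-1$, so the switch vertices are $A_q=(q-1,q)$, $B_q=(q,q)$, $C_q=(q-1,q+1)$ and $D_q=(q,q+1)$; no wrap occurs. For $q=d-1$ the $+i$ move wraps from row $d-1$ to row $0$, so $C_{d-1}=(d-2-n \bmod r,0)$ and $D_{d-1}=(d-1-n \bmod r,0)$.

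Hence switch $s_q$ removes one edge of row $q$ and one edge of row $q+1 \pmod d$. The switches $s_1,\ldots,s_{d-1}$ therefore join the rows in the chain
\begin{equation}
1-2-\cdots-(d-1)-0.
\end{equation}
This chain is a path through all $d$ row cycles. Splicing along a spanning path, using exactly $d-1$ splices, should yield a single cycle.

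The induction runs as follows. Let $Z_1$ be the result of applying switch $s_1$ to the row cycles of rows $1$ and $2$ (rows $1$ and $0$ when $d=2$). For $q\ge 2$, let $Z_q$ be the result of applying $s_q$ to $Z_{q-1}$ and to the untouched row cycle of row $q+1 \pmod d$. The claim is that $Z_q$ is a single cycle whose vertex set is exactly rows $1,\ldots,q+1 \pmod d$.

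In the inductive step, Lemma~\ref{lem:splicing} applies to the two cycles $C_1=Z_{q-1}$ and $C_2=$ (row $q+1 \bmod d$), with $xy=A_qB_q$ and $uv=C_qD_q$. The added edges are $xu=A_qC_q$ and $yv=B_qD_q$, which are valid vertical edges by Lemma~\ref{lem:switches-valid-disjoint}. Three facts need checking.
\begin{itemize}
\item[(i)] The two cycles are vertex-disjoint. This holds because row $q+1 \pmod d$ has not yet been touched: it is row $q+1\le d-1$ for $q<d-1$, and row $0$ for $q=d-1$, and row $0$ appears in no earlier switch.
\item[(ii)] $C_qD_q$ is still an edge of its row. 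This holds because no earlier switch removed an edge from that row.
\item[(iii)] $A_qB_q$ is still an edge of $Z_{q-1}$.
\end{itemize}

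Fact (iii) is the one delicate point. Row $q$ has already had one edge removed, namely $C_{q-1}D_{q-1}$, which joins columns $q-2$ and $q-1$. The edge $A_qB_q$ joins columns $q-1$ and $q$. These two edges share the vertex $(q-1,q)$ but are different edges provided $r\ge 3$, so that columns $q-2$ and $q$ are distinct modulo $r$. This follows from $r=d(a_1^2+b_1^2)\ge 2d\ge 4$ for $d\ge 2$.

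Since Lemma~\ref{lem:splicing} needs only that one edge be deleted from each of two disjoint cycles, the orientation of the cycles plays no role. After step $q=d-1$, the cycle $Z_{d-1}$ covers all $d$ rows, so it spans $V$. By definition its edge set is $(E_h\setminus R)\cup S=E(H_1)$.

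The main obstacle I expect is the bookkeeping for the wrapping switch $q=d-1$, together with the small case $d=2$, where row $0$ meets row $1$ only through the wrap. There I will verify explicitly that $C_{d-1}D_{d-1}$ is a genuine edge of row $0$, which holds because $E$ moves commute with the wrap shift. I will also verify that no degenerate coincidences occur among the $O(1)$ columns involved, again using $r\ge 4$.
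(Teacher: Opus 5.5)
Your proposal is correct and takes essentially the same route as the paper: you splice the $d$ row cycles along the chain $1-2-\cdots-(d-1)-0$, using Lemma~\ref{lem:splicing} once per switch, so that $d$ cycles become one. Your explicit induction, including the check that $A_qB_q\neq C_{q-1}D_{q-1}$ via $r\ge 4$, makes rigorous the bookkeeping that the paper compresses into the sentence ``each switch reduces the number of cycles by one.''
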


\begin{proof}
Before switching, $E_h$ consists of $d$ disjoint row cycles:
\begin{equation}
    C_0,C_1,\ldots,C_{d-1},
\end{equation}
where $C_y$ denotes the horizontal cycle in row $y$.

The switch at row $q$ removes one horizontal edge from row $q$ and one horizontal edge from row $q+1$ for $1\le q\le d-2$.  It then inserts two vertical edges between these two rows.  Therefore, for $1\le q\le d-2$, switch $s_q$ splices $C_q$ with $C_{q+1}$.  For the last switch, $q=d-1$, the vertices $C_{d-1}=A_{d-1}+i$ and $D_{d-1}=B_{d-1}+i$ are obtained through the vertical wrap from row $d-1$ to row $0$; hence $s_{d-1}$ splices $C_{d-1}$ with $C_0$.

Thus the row-cycle adjacencies created by the switches are
\begin{equation}
\begin{array}{c|c}
\text{switch} & \text{row cycles spliced}\\
\hline
s_1 & C_1 \leftrightarrow C_2\\
s_2 & C_2 \leftrightarrow C_3\\
\vdots & \vdots\\
s_{d-2} & C_{d-2} \leftrightarrow C_{d-1}\\
s_{d-1} & C_{d-1} \leftrightarrow C_0.
\end{array}
\end{equation}
These $d-1$ adjacencies form a connected tree on the $d$ row cycles.  Equivalently, the row cycles are connected in the order
\begin{equation}
    C_0-C_{d-1}-C_{d-2}-\cdots-C_2-C_1.
\end{equation}
By Lemma~\ref{lem:splicing}, each switch reduces the number of cycles by one.  Starting from $d$ cycles and applying $d-1$ switches leaves exactly one cycle.  Since the switches preserve degree two and keep all vertices, the resulting cycle contains all $N$ vertices.  Therefore $H_1$ is Hamiltonian.
\end{proof}

\begin{lemma}
\label{lem:vertical-index}
The vertical cycles before switching are indexed by $x\bmod d$.
\end{lemma}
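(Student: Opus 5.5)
The plan is to show that the residue $x \bmod d$ is invariant under every vertical move, and then that the vertical edges with a fixed residue form exactly one cycle.

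\emph{Invariance.} By the definitions of $N_i$ and $S_i$, a non-wrapping vertical step leaves $x$ unchanged. A wrapping step changes $x$ by $\mp n \pmod r$. Lemma~\ref{lem:n-multiple-d} gives $d\mid n$, and $d\mid r$ (shown in the same proof). So $x \bmod d$ is well defined on residues modulo $r$ and is preserved by the wrap as well. Hence each vertical cycle lies inside one class $X_c=\{(x,y): x\equiv c \pmod d\}$ for some $c\in\{0,\dots,d-1\}$, and $|X_c| = (r/d)\cdot d = r$.

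\emph{One cycle per class.} Starting from $(x,0)$, I apply $N_i$ $d$ times. This climbs to row $d-1$ and wraps to $(x-n \bmod r, 0)$. The orbit of the $+i$ map therefore visits row $0$ at the columns $x, x-n, x-2n,\dots \pmod r$, and in between passes through every row once per visit. The orbit size is $d$ times the order of $n$ in $\mathbb{Z}_r$, namely $d\cdot r/\gcd(n,r)$. For this orbit to be all of $X_c$, I need $\gcd(n,r)=d$.

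\emph{Main obstacle: showing $\gcd(n,r)=d$.} Write $n=d(a_1v-b_1u)$ and $r=d(a_1^2+b_1^2)$. It suffices to show $\gcd(a_1v-b_1u,\;a_1^2+b_1^2)=1$. From $ua+vb=d$ we get $ua_1+vb_1=1$. Then
\[
(a_1v-b_1u)^2+(ua_1+vb_1)^2=(u^2+v^2)(a_1^2+b_1^2),
\]
so $(a_1v-b_1u)^2\equiv -1 \pmod{a_1^2+b_1^2}$. Any common prime divisor $p$ of $a_1v-b_1u$ and $a_1^2+b_1^2$ would therefore divide $-1$, which is impossible. Hence the gcd is $1$.

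Each class then forms a single vertical cycle of length $r$. There are exactly $d$ such cycles, indexed by $c=x \bmod d$, which proves the lemma. This is consistent with the stated fact that the $+i$ edges form $d$ disjoint cycles.
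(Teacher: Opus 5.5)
Your proof is correct and follows the same outline as the paper. Both arguments show that $x \bmod d$ is invariant because $d\mid n$, reduce the question to $\gcd(a_1v-b_1u,\,a_1^2+b_1^2)=1$, and then count orbits over the wraps. The only real difference is in the coprimality step: you use the identity $(a_1v-b_1u)^2+(ua_1+vb_1)^2=(u^2+v^2)(a_1^2+b_1^2)$ to get the slightly stronger fact $(a_1v-b_1u)^2\equiv-1\pmod{a_1^2+b_1^2}$, whereas the paper shows that a common prime $p$ would force $p\mid a_1$ and $p\mid b_1$, contradicting $\gcd(a_1,b_1)=1$.
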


\begin{proof}
A non-wrap vertical step changes $(x,y)$ to $(x,y+1)$ and therefore preserves $x\bmod d$.  A wrap vertical step changes $(x,d-1)$ to $(x-n,0)$.  By Lemma~\ref{lem:n-multiple-d}, $n$ is divisible by $d$, so $x-n\equiv x\pmod d$.  Hence every vertical step preserves $x\bmod d$, and each vertical component is contained in one residue class modulo $d$.

It remains to show that each residue class gives one cycle, not several smaller cycles.  Write
\begin{equation}
    a=da_1,\qquad b=db_1,\qquad r=d(a_1^2+b_1^2).
\end{equation}
Since $ua+vb=d$, we have
\begin{equation}
    ua_1+vb_1=1,
\end{equation}
so $\gcd(a_1,b_1)=1$.  Also write $n=dn'$ and $r=dr'$, where
\begin{equation}
    n'=a_1v-b_1u,\qquad r'=a_1^2+b_1^2.
\end{equation}
We claim that
\begin{equation}
    \gcd(n',r')=1.
\end{equation}
Indeed, if a prime $p$ divides both $n'=a_1v-b_1u$ and $r'=a_1^2+b_1^2$, then using $ua_1+vb_1=1$ gives
\begin{align}
    a_1 &= a_1(ua_1+vb_1)
        =u a_1^2+v a_1b_1
        \equiv -u b_1^2+v a_1b_1 \nonumber\\
        &=b_1(a_1v-b_1u)\equiv 0 \pmod p.
\end{align}
Similarly, $b_1\equiv0\pmod p$, contradicting $\gcd(a_1,b_1)=1$.  Hence $\gcd(n',r')=1$.

Now fix a residue class $c\bmod d$.  Its columns are
\begin{equation}
    c,\ c+d,\ c+2d,\ldots,c+(r'-1)d.
\end{equation}
After one full vertical traversal through the $d$ rows, the wrap changes $x$ by $-n=-dn'$, which is the step $-n'$ on the $r'$ columns inside the fixed residue class.  Since $\gcd(n',r')=1$, repeated full wraps visit all $r'$ columns in that residue class before returning.  Each such column contributes all $d$ rows, so the component has $dr'=r$ vertices.  Therefore each residue class $x\bmod d$ is exactly one vertical cycle of length $r$, and there are precisely $d$ such cycles.
\end{proof}

\begin{lemma}
\label{lem:h2-hamiltonian}
For $d>1$, the graph $H_2$ is a Hamiltonian cycle.
\end{lemma}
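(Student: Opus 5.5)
The plan mirrors the proof of Lemma~\ref{lem:h1-hamiltonian}, now applied to the vertical structure. By Lemma~\ref{lem:vertical-index}, $E_v$ consists of exactly $d$ vertex-disjoint vertical cycles $V_0,\ldots,V_{d-1}$, where $V_c$ contains the vertices with $x\equiv c\pmod d$, each of length $r$. Passing from $E_v$ to $E(H_2)=(E_v\setminus S)\cup R$ applies, for each $q$, the reverse exchange: delete the two vertical edges $A_qC_q$ and $B_qD_q$ and insert the two horizontal edges $A_qB_q$ and $C_qD_q$. By Lemma~\ref{lem:switches-valid-disjoint}, these exchanges use pairwise distinct edges, so they can be performed one at a time. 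Each exchange keeps every degree at $2$. It is exactly the configuration of Lemma~\ref{lem:splicing} with $xy=A_qC_q$ and $uv=B_qD_q$, provided the two deleted edges lie in different cycles at the moment of the exchange.

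Next I identify which vertical cycles switch $s_q$ touches. The edge $A_qC_q$ has lower endpoint $A_q=(q-1,q)$, which lies in $V_{(q-1)\bmod d}$. The edge $B_qD_q$ has lower endpoint $B_q=(q,q)$, which lies in $V_{q\bmod d}$. For $1\le q\le d-1$ these are distinct classes, so $s_q$ joins $V_{q-1}$ and $V_q$. The last switch $s_{d-1}$ involves the wrap from row $d-1$ to row $0$. Lemma~\ref{lem:vertical-index} shows that the wrap shift $n$ preserves $x\bmod d$, so the residue labelling stays valid there. The adjacency pattern is therefore the path $V_0-V_1-\cdots-V_{d-1}$, which is a spanning tree on the $d$ vertical cycles.

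I would then argue by induction on $q$. After performing $s_1,\ldots,s_{q-1}$, the current $2$-regular graph consists of one cycle $W_{q-1}$ on the vertex set $V_0\cup\cdots\cup V_{q-1}$, together with the untouched cycles $V_q,\ldots,V_{d-1}$. The edge $A_qC_q$ lies in $W_{q-1}$: it has not been deleted, and it is distinct from earlier deleted edges by Lemma~\ref{lem:switches-valid-disjoint}. The edge $B_qD_q$ is still in $V_q$, since no earlier switch touched class $q$. These two cycles are vertex-disjoint, so Lemma~\ref{lem:splicing} produces a single cycle on $V_0\cup\cdots\cup V_q$. After $d-1$ steps the result is one cycle through all $dr=N$ vertices, so $H_2$ is Hamiltonian.

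The main obstacle is the membership bookkeeping behind the inductive step. One must confirm that the edge $A_qC_q$ still belongs to the merged cycle at the time of switch $q$, and is not altered by an earlier exchange at a shared vertex. At column $q-1$ the earlier switch $s_{q-1}$ uses $B_{q-1}=(q-1,q-1)$ and $D_{q-1}=(q-1,q)=A_q$. So $A_q$ is an endpoint of the removed edge $B_{q-1}D_{q-1}$, whose lower endpoint is $B_{q-1}$. The edge $A_qC_q$ instead has lower endpoint $A_q$, one row higher, so the two are different edges. I would check this explicitly, including the wrap case $q=d-1$ and small $d$ such as $d=2$, where the two exchanged vertical edges of $s_1$ lie in different classes because $0\not\equiv1\pmod 2$.
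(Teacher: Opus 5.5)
Your proposal is correct and follows the same route as the paper. You index the $d$ vertical cycles by $x \bmod d$ via Lemma~\ref{lem:vertical-index}, observe that switch $s_q$ (including the wrap switch $s_{d-1}$) exchanges edges between classes $q-1$ and $q$, and splice along the spanning path $0-1-\cdots-(d-1)$ using Lemma~\ref{lem:splicing}. Your explicit induction, which checks that $A_qC_q$ still lies in the merged cycle on $V_0\cup\cdots\cup V_{q-1}$ while $B_qD_q$ lies in the untouched $V_q$, makes rigorous the step the paper compresses into ``a tree on the $d$ vertical cycles reduces them to one cycle.''
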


\begin{proof}
Before switching, $E_v$ consists of $d$ vertical cycles.  By Lemma~\ref{lem:vertical-index}, these cycles are indexed by $x\bmod d$.

Consider the switch for a fixed $q$, where $1\le q\le d-1$.  The horizontal edge
\begin{equation}
    A_qB_q=((q-1,q),(q,q))
\end{equation}
connects the vertical-cycle index $q-1$ to the vertical-cycle index $q$.  Since $C_q=A_q+i$ and $D_q=B_q+i$, adding $i$ preserves the $x$-coordinate modulo $d$ by Lemma~\ref{lem:vertical-index}.  Hence $C_qD_q$ also connects the vertical-cycle index $q-1$ to the vertical-cycle index $q$.

In $H_2$, the two vertical edges $A_qC_q$ and $B_qD_q$ are removed from the corresponding vertical cycles, while the two horizontal edges $A_qB_q$ and $C_qD_q$ are inserted.  By Lemma~\ref{lem:splicing}, this splices the two vertical cycles indexed by $q-1$ and $q$.

As $q$ ranges from $1$ to $d-1$, the complement switches join the vertical-cycle indices through
\begin{equation}
    0-1-2-\cdots-(d-1),
\end{equation}
which is a tree on the $d$ vertical cycles.  Therefore the $d-1$ complement switches reduce the $d$ vertical cycles to one cycle.  Since the switches preserve degree two and all vertices remain present, $H_2$ is Hamiltonian.
\end{proof}

\begin{lemma}
\label{lem:edge-disjoint-partition}
For every nondegenerate Gaussian network with $N>2$, the constructed graphs $H_1$ and $H_2$ are edge-disjoint and their edge sets partition $E(G_\alpha)$.
\end{lemma}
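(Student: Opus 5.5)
The proof is set algebra on the two definitions
\[
E(H_1)=(E_h\setminus R)\cup S,\qquad E(H_2)=(E_v\setminus S)\cup R,
\]
together with three structural facts. First, every edge of $G_\alpha$ is either horizontal or vertical but not both. Second, $R\subseteq E_h$ and $S\subseteq E_v$. Third, $R$ and $S$ are genuine edge sets of size $2(d-1)$ each, which is exactly Lemma~\ref{lem:switches-valid-disjoint}. Once these are established, the two unions split along the partition $E(G_\alpha)=E_h\sqcup E_v$, and both claims follow directly. For $d=1$ we have $R=S=\emptyset$, so $H_1=E_h$ and $H_2=E_v$, and only the first fact is needed.

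The order of steps is as follows.

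\textbf{Step 1: $E_h\cap E_v=\emptyset$.} In the rectangle representation a horizontal edge joins two vertices in the same row. A vertical edge either joins rows $y$ and $y+1$ with $d\ge 2$, or, when $d=1$, joins $(x,0)$ to $(x-n,0)$. So for $d\ge 2$ the two classes are separated by row index, except possibly for $d=2$, where the wrap edge and the non-wrap edge both join rows $0$ and $1$; these are still not in the same row, so they remain vertical and cannot be horizontal. Stated algebraically, an edge would lie in both classes only if $\pm1\equiv\pm i \pmod{\alpha}$. That forces $\alpha$ to divide $1\pm i$, hence $N(\alpha)\mid 2$, which means $N\le 2$; this is excluded. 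This nondegeneracy condition is also needed for $G_\alpha$ to be a simple $4$-regular graph, so that $E_h$ and $E_v$ are each $2$-regular. Consequently
\[
|E_h|=|E_v|=N \quad\text{and}\quad |E(G_\alpha)|=2N.
\]
I expect this step to be the main obstacle. The edge case $N=2$ (and small cases such as $N=4,5$, where $n$ or $r$ is tiny) must be checked so that no horizontal edge coincides with a vertical one and there are no loops or multi-edges. I would argue it directly through the Gaussian-integer congruence rather than through coordinates.

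\textbf{Step 2: disjointness.} Using $R\subseteq E_h$ and $S\subseteq E_v$, compute
\[
E(H_1)\cap E(H_2)=\bigl[(E_h\setminus R)\cap(E_v\setminus S)\bigr]\cup\bigl[(E_h\setminus R)\cap R\bigr]\cup\bigl[S\cap(E_v\setminus S)\bigr]\cup\bigl[S\cap R\bigr].
\]
The first and last pieces lie in $E_h\cap E_v=\emptyset$, and the middle two are empty by construction.

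\textbf{Step 3: covering.} Again using $R\subseteq E_h$ and $S\subseteq E_v$,
\[
E(H_1)\cup E(H_2)=(E_h\setminus R)\cup R\cup(E_v\setminus S)\cup S=E_h\cup E_v=E(G_\alpha).
\]

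\textbf{Step 4: consistency check.} As a cross-check with Lemmas~\ref{lem:h1-hamiltonian} and \ref{lem:h2-hamiltonian}, I would verify the counts. Lemma~\ref{lem:switches-valid-disjoint} guarantees $|R|=|S|=2(d-1)$, so
\[
|E(H_1)|=N-|R|+|S|=N \quad\text{and}\quad |E(H_2)|=N,
\]
which sum to $2N$. This shows that each $H_j$ is a $2$-regular spanning graph, consistent with $H_2$ being the edge-complement of $H_1$.
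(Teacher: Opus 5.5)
Your Steps 1--3 are the paper's proof. You use the same Gaussian-integer argument ($\pm1\equiv\pm i$ would force $\alpha\mid 1\pm i$, impossible when $N>2$) to get $E_h\cap E_v=\emptyset$, and the same bookkeeping with $R\subseteq E_h$ and $S\subseteq E_v$ to get disjointness and covering. Step 4 is not needed for the lemma, and its final inference goes too far: $|E(H_j)|=N$ alone does not make $H_j$ $2$-regular, which would need a separate per-vertex degree check.
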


\begin{proof}
Let $E_h$ be the set of edges generated by $\pm1$, and let $E_v$ be the set of edges generated by $\pm i$.  Since $N>2$, no undirected edge can be both horizontal and vertical.  Otherwise, one would have $\pm1\equiv\pm i\pmod{\alpha}$, so $\alpha$ would divide one of $1+i$ or $1-i$, whose norm is $2$, contradicting $N(\alpha)=N>2$.  Hence
\begin{equation}
    E_h\cap E_v=\emptyset.
\end{equation}
Also,
\begin{equation}
    E(G_\alpha)=E_h\cup E_v.
\end{equation}

The switch construction chooses $R\subseteq E_h$ and $S\subseteq E_v$ and defines
\begin{equation}
    E(H_1)=(E_h\setminus R)\cup S,
\end{equation}
while
\begin{equation}
    E(H_2)=(E_v\setminus S)\cup R.
\end{equation}
Thus every horizontal edge is placed in exactly one of $H_1$ and $H_2$: the edges in $R$ go to $H_2$, and the edges in $E_h\setminus R$ remain in $H_1$.  Similarly, every vertical edge is placed in exactly one of $H_1$ and $H_2$: the edges in $S$ go to $H_1$, and the edges in $E_v\setminus S$ remain in $H_2$.  Therefore
\begin{equation}
    E(H_1)\cap E(H_2)=\emptyset
\end{equation}
and
\begin{equation}
    E(H_1)\cup E(H_2)=E(G_\alpha).
\end{equation}
\end{proof}

\begin{theorem}
\label{thm:main}
Let $\alpha=a+bi$ with $0<a\leq b$, let $N=a^2+b^2>2$, and let $G_\alpha$ be the corresponding nondegenerate Gaussian network.  The construction defined by
\begin{equation}
    q=1,2,\ldots,d-1,\qquad a_q=q-1,
\end{equation}
where $d=\gcd(a,b)$, constructs two edge-disjoint Hamiltonian cycles in $G_\alpha$.  It applies uniformly to $d=1$, $d>1$, odd $d$, and even $d$.
\end{theorem}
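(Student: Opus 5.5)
The proof assembles the lemmas already established, after splitting into the cases $d=1$ and $d>1$. In both cases edge-disjointness and the partition property come from Lemma~\ref{lem:edge-disjoint-partition}. That lemma needs only $N>2$ and the definitions $E(H_1)=(E_h\setminus R)\cup S$ and $E(H_2)=(E_v\setminus S)\cup R$, so it holds uniformly; it reduces to the trivial statement $E(H_1)=E_h$, $E(H_2)=E_v$ when $R=S=\emptyset$. What remains is to show that each $H_j$ is a single Hamiltonian cycle.

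\textbf{Case $d=1$.} The switch set is empty, so $H_1$ consists of the horizontal edges and $H_2$ of the vertical edges. With $d=1$ there is a single row of length $r=N$, so $E_h$ is one cycle through all $N$ vertices. For $E_v$, apply Lemma~\ref{lem:vertical-index} with $d=1$: there is exactly one residue class modulo $1$, so $E_v$ is a single cycle of length $r=N$. One must also check that these are genuine cycles in the simple graph $G_\alpha$ and not degenerate multi-edges. This holds because $N>2$ gives $N\ge 5$ (norms are sums of two positive squares, and $N=2$ is excluded), so $\pm1$ and $\pm i$ are distinct nonzero residues.

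\textbf{Case $d>1$.} Here we cite Lemma~\ref{lem:switches-valid-disjoint}, Lemma~\ref{lem:h1-hamiltonian} and Lemma~\ref{lem:h2-hamiltonian}. The point needing care is that the switches are applied sequentially: Lemma~\ref{lem:splicing} requires the two removed edges to lie on distinct current cycles. I would argue by induction on $q$. After switches $s_1,\ldots,s_{q-1}$, the spliced components correspond to connected pieces of the tree described in those lemmas. Because the tree has no cycle, the two rows (respectively residue classes) joined by $s_q$ lie in different current components. Moreover, the edges $A_qB_q$ and $C_qD_q$ (respectively $A_qC_q$ and $B_qD_q$) are still present, since by Lemma~\ref{lem:switches-valid-disjoint} no earlier switch removed them.

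A subtle point is that splicing via Lemma~\ref{lem:splicing} requires the added edges to join the correct endpoints, so that the result is one cycle rather than two. I would justify this using orientation: removing $xy$ and $uv$ and adding $xu$, $yv$ always yields a single cycle when the two paths lie in different components, whichever pairing is used. Hence each switch lowers the component count by one. After $d-1$ switches exactly one $2$-regular connected spanning subgraph remains, that is, a Hamiltonian cycle.

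\textbf{Expected obstacle.} The main difficulty is small $d$ and wrap effects: for $d=2$ the rows $q+1 \bmod d$ and $0$ coincide, and for $d\ge 2$ one must rule out coincidences of switch edges modulo $r$. Distinct columns $q-1<d\le r$ should settle this, so the switch edges are distinct. Finally, the statement that the construction applies uniformly to odd and even $d$ follows because no parity hypothesis appears anywhere in the argument.
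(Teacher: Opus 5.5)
Your proposal is correct and follows the paper's proof: treat $d=1$ separately, and for $d>1$ combine Lemmas~\ref{lem:h1-hamiltonian}, \ref{lem:h2-hamiltonian} and \ref{lem:edge-disjoint-partition}. Two of your steps are more explicit than the paper but not a different route: the induction that justifies applying Lemma~\ref{lem:splicing} one switch at a time (the removed edges are still present and lie in different current components because the splice graph is a tree), and obtaining the $d=1$ vertical cycle from Lemma~\ref{lem:vertical-index} instead of citing Flahive--Bose.
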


\begin{proof}
If $d=1$, no switch is used.  The horizontal edges form one Hamiltonian cycle and the vertical edges form a second Hamiltonian cycle, as in the known coprime case \cite{FlahiveBose2010}.  These cycles are edge-disjoint because they use different dimensions.

Now assume $d>1$.  Lemma~\ref{lem:h1-hamiltonian} shows that $H_1$ is Hamiltonian.  Lemma~\ref{lem:h2-hamiltonian} shows that $H_2$ is Hamiltonian.  Lemma~\ref{lem:edge-disjoint-partition} shows that their edge sets are disjoint and partition the edge set of $G_\alpha$:
\begin{equation}
    E(H_1)\cap E(H_2)=\emptyset,
    \qquad
    E(H_1)\cup E(H_2)=E(G_\alpha).
\end{equation}
Thus, $H_1$ and $H_2$ are two edge-disjoint Hamiltonian cycles.  The proof does not depend on whether $d$ is odd or even.
\end{proof}

\begin{corollary}
\label{cor:edge-decomposition}
For $d>1$, the two cycles have the following edge-type counts:
\begin{align}
    |E(H_1)\cap E_h| &= N-2(d-1),\\
    |E(H_1)\cap E_v| &= 2(d-1),\\
    |E(H_2)\cap E_h| &= 2(d-1),\\
    |E(H_2)\cap E_v| &= N-2(d-1).
\end{align}
\end{corollary}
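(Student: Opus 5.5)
The plan is a direct counting argument from the definitions of $E(H_1)$ and $E(H_2)$ together with Lemma~\ref{lem:switches-valid-disjoint} and Lemma~\ref{lem:edge-disjoint-partition}.

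\textbf{Step 1: the totals.} Since $G_\alpha$ is $4$-regular on $N$ vertices, the $\pm1$ moves give a $2$-regular spanning subgraph $E_h$, so $|E_h|=N$. Concretely, there are $d$ row cycles of length $r$, and $dr=N$. Likewise, by Lemma~\ref{lem:vertical-index}, $|E_v|=N$. For these counts to be honest edge counts we need the row and vertical cycles to be genuine simple cycles. This holds because $r\ge 3$: indeed $d\ge 2$ and $r=d(a_1^2+b_1^2)\ge 2d\ge 4$.

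\textbf{Step 2: the sizes of $R$ and $S$.} Next I would show $|R|=|S|=2(d-1)$, meaning the $2(d-1)$ listed edges of each set are pairwise distinct. Lemma~\ref{lem:switches-valid-disjoint} already asserts that no two switches share a removed or inserted edge. It remains to check that within a single switch the two edges differ: $A_qB_q\ne C_qD_q$ and $A_qC_q\ne B_qD_q$. This holds since $A_q\ne C_q$ and $A_q\ne D_q$ when $N>2$, as $i\not\equiv 0,\pm1$. If I wanted to re-verify distinctness across switches, I would note the following. The removed edges $A_qB_q$ lie in row $q$ at columns $\{q-1,q\}$. The edges $C_qD_q$ lie in row $q+1$ at the same columns for $q\le d-2$, and at shifted columns $\{d-2-n,d-1-n\}$ in row $0$ for $q=d-1$. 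Two edges in the same row $q$, namely $A_qB_q$ and $C_{q-1}D_{q-1}$, occupy column pairs $\{q-1,q\}$ and $\{q-2,q-1\}$. These differ because $r\ge 4$.

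\textbf{Step 3: reading off the counts.} By Lemma~\ref{lem:edge-disjoint-partition}, $E_h\cap E_v=\emptyset$, with $R\subseteq E_h$ and $S\subseteq E_v$. Therefore
\[
E(H_1)\cap E_h=E_h\setminus R, \qquad E(H_1)\cap E_v=S,
\]
\[
E(H_2)\cap E_h=R, \qquad E(H_2)\cap E_v=E_v\setminus S.
\]
Substituting $|E_h|=|E_v|=N$ and $|R|=|S|=2(d-1)$ gives the four stated formulas. As a sanity check, each cycle then has exactly $N$ edges, matching Lemmas~\ref{lem:h1-hamiltonian} and~\ref{lem:h2-hamiltonian}.

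The main obstacle is Step 2, specifically the distinctness of the removed edges involving the wrap edge $C_{d-1}D_{d-1}$ in row $0$. This is the only place where the shift $n$ enters. I would handle it by observing that row $0$ contains only one removed edge, so no collision is possible there at all.
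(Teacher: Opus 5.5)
Your proposal is correct and follows the paper's own counting argument: each of the $d-1$ switches moves two horizontal edges out of $H_1$ and two vertical edges into it, and $H_2$, being the complement, gets the opposite counts. Your version spells out what the paper's two-line proof uses without comment, namely $|E_h|=|E_v|=N$ via $r\ge 4$, $|R|=|S|=2(d-1)$ (including the within-switch and row-$0$ distinctness checks), and $E_h\cap E_v=\emptyset$, so it is a more careful rendering of the same proof.
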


\begin{proof}
Each of the $d-1$ switches removes two horizontal edges from $H_1$ and inserts two vertical edges into $H_1$.  Thus $H_1$ contains $N-2(d-1)$ horizontal edges and $2(d-1)$ vertical edges.  Since $H_2$ is the complement, it contains the opposite counts.
\end{proof}

\section{Algorithms}
\label{sec:algorithms}

This section gives explicit algorithms for constructing the two Hamiltonian cycles.  Algorithm~\ref{alg:switchset} constructs the switch sets.  Algorithm~\ref{alg:neighbors} gives constant-time membership for the two cycles.  Algorithm~\ref{alg:traverse} lists the Hamiltonian-cycle order.

\begin{paperalgorithm}{ConstructSwitchSets$(a,b)$}{alg:switchset}
\REQUIRE Generator $\alpha=a+bi$, $0<a\le b$
\ENSURE Removed set $R$ and inserted set $S$
\STATE $d\leftarrow \gcd(a,b)$; $N\leftarrow a^2+b^2$; $r\leftarrow N/d$
\STATE Find $u,v$ such that $ua+vb=d$
\STATE $n\leftarrow av-bu \pmod r$
\STATE $R\leftarrow\emptyset$, $S\leftarrow\emptyset$
\FOR{$q=1$ to $d-1$}
    \STATE $a_q\leftarrow q-1$
    \STATE $A\leftarrow(a_q,q)$; $B\leftarrow(a_q+1,q)$
    \STATE $C\leftarrow A+i$ using the rectangle move
    \STATE $D\leftarrow B+i$ using the rectangle move
    \STATE $R\leftarrow R\cup\{AB,CD\}$
    \STATE $S\leftarrow S\cup\{AC,BD\}$
\ENDFOR
\RETURN $R,S$
\end{paperalgorithm}

\begin{paperalgorithm}{CycleNeighbors$(v,R,S)$}{alg:neighbors}
\REQUIRE Vertex $v$, switch sets $R,S$
\ENSURE The two neighbors of $v$ in $H_1$ and in $H_2$
\STATE $N_G(v)\leftarrow\{v\pm1,v\pm i\}$
\STATE $N_1\leftarrow\emptyset$; $N_2\leftarrow\emptyset$
\FOR{each $w\in N_G(v)$}
    \STATE $e\leftarrow\{v,w\}$
    \IF{$e\in S$ or ($e$ is horizontal and $e\notin R$)}
        \STATE $N_1\leftarrow N_1\cup\{w\}$
    \ELSE
        \STATE $N_2\leftarrow N_2\cup\{w\}$
    \ENDIF
\ENDFOR
\RETURN $N_1,N_2$
\end{paperalgorithm}

\begin{paperalgorithm}{TraverseCycle$(s,N_C)$}{alg:traverse}
\REQUIRE Start vertex $s$, neighbor function $N_C(\cdot)$ for $H_1$ or $H_2$
\ENSURE Vertex order of the Hamiltonian cycle
\STATE $L\leftarrow [s]$; $\text{prev}\leftarrow \bot$; $\text{cur}\leftarrow s$
\REPEAT
    \STATE Let $N_C(\text{cur})=\{z_1,z_2\}$
    \IF{$z_1\ne \text{prev}$}
        \STATE $\text{next}\leftarrow z_1$
    \ELSE
        \STATE $\text{next}\leftarrow z_2$
    \ENDIF
    \STATE $\text{prev}\leftarrow \text{cur}$; $\text{cur}\leftarrow \text{next}$
    \IF{$\text{cur}\ne s$}
        \STATE append $\text{cur}$ to $L$
    \ENDIF
\UNTIL{$\text{cur}=s$}
\RETURN $L$
\end{paperalgorithm}

\subsection{Complexity}

The switch rule computes each switch using a constant number of arithmetic operations.  Therefore, constructing the switch sets takes $O(d)$ time.  Since each Hamiltonian cycle has $N$ edges, listing either cycle takes $O(N)$ time.  The total cycle-generation time is therefore $O(N)$, and the local rule for each switch is constant time.

\section{Examples}
\label{sec:examples}

\subsection{Example 1: The Coprime Case $\alpha=3+5i$}

Here
\begin{equation}
    d=\gcd(3,5)=1,\qquad N=34.
\end{equation}
The switch set is empty.  The first Hamiltonian cycle is obtained by repeatedly adding $1$, and the second is obtained by repeatedly adding $i$.  Thus, the construction reduces exactly to the previously known $d=1$ case.

\subsection{Example 2: The Non-Coprime Case $\alpha=4+4i$}

Here
\begin{equation}
    d=4,\qquad N=32,\qquad r=8.
\end{equation}
The switches are
\begin{equation}
    (q,a_q)=(1,0),(2,1),(3,2).
\end{equation}
The first two switches connect consecutive rows without wrap-around:
\begin{align}
    (0,1)(1,1),\;(0,2)(1,2),\\
    (1,2)(2,2),\;(1,3)(2,3).
\end{align}
The final switch uses the wrap from row $3$ to row $0$:
\begin{equation}
    A_3=(2,3),\quad B_3=(3,3),
\end{equation}
and because $n=4$,
\begin{equation}
    C_3=(6,0),\quad D_3=(7,0).
\end{equation}
This produces
\begin{equation}
    H_1:26\text{ horizontal edges}+6\text{ vertical edges},
\end{equation}
and
\begin{equation}
    H_2:6\text{ horizontal edges}+26\text{ vertical edges}.
\end{equation}

The switch data for this example are
\begin{equation}
\begin{array}{c|c|c|c}
q & a_q & A_q,B_q & C_q,D_q\\
\hline
1 & 0 & (0,1),(1,1) & (0,2),(1,2)\\
2 & 1 & (1,2),(2,2) & (1,3),(2,3)\\
3 & 2 & (2,3),(3,3) & (6,0),(7,0).
\end{array}
\end{equation}
The last switch is the wrap switch from row $3$ to row $0$.

\begin{figure}[H]
	\centering
	\begin{minipage}{0.49\textwidth}
		\centering
		\includegraphics[width=\linewidth]{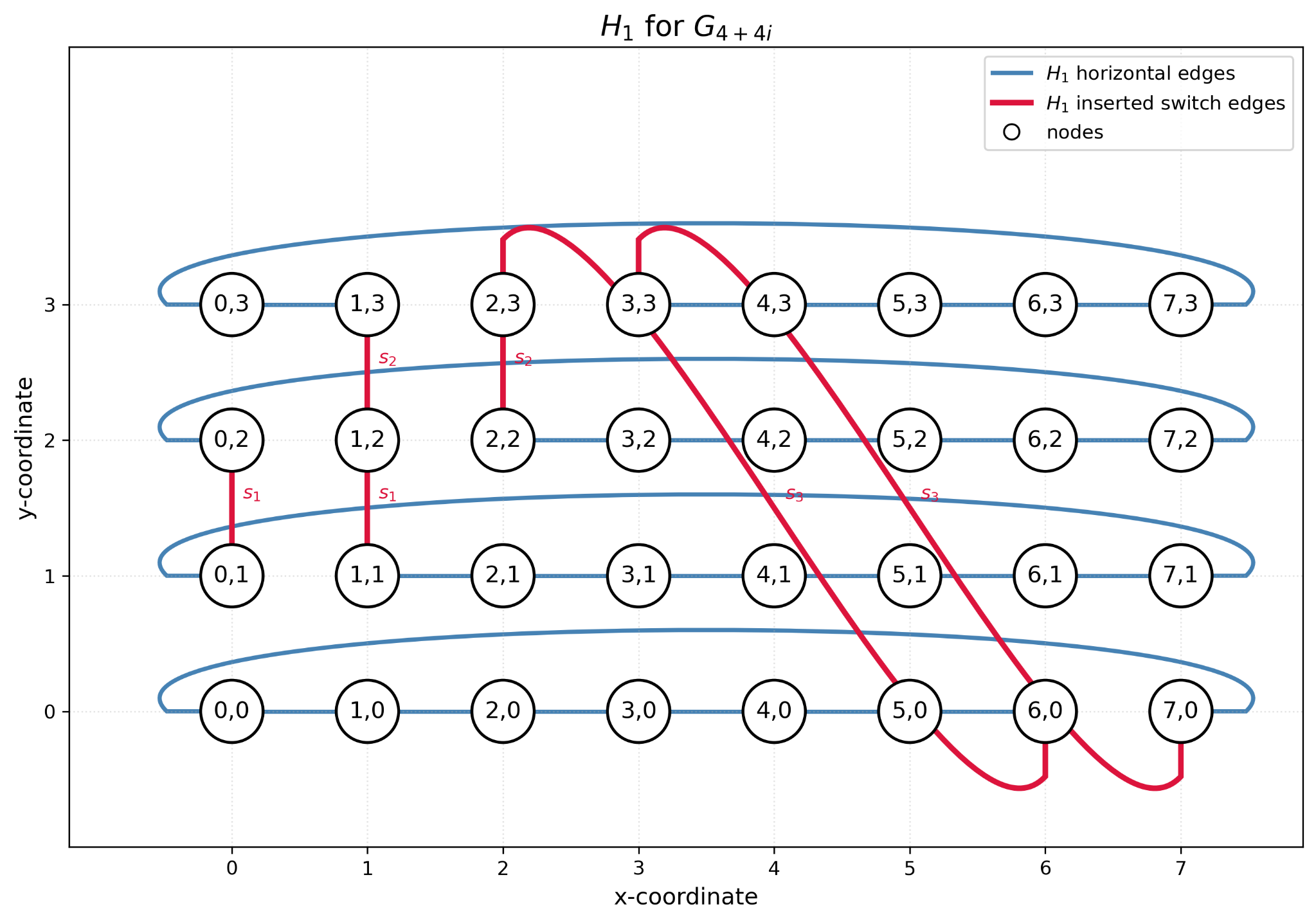}\\[-1mm]
		\textbf{(a)} $H_1$
	\end{minipage}
	\hfill
	\begin{minipage}{0.49\textwidth}
		\centering
		\includegraphics[width=\linewidth]{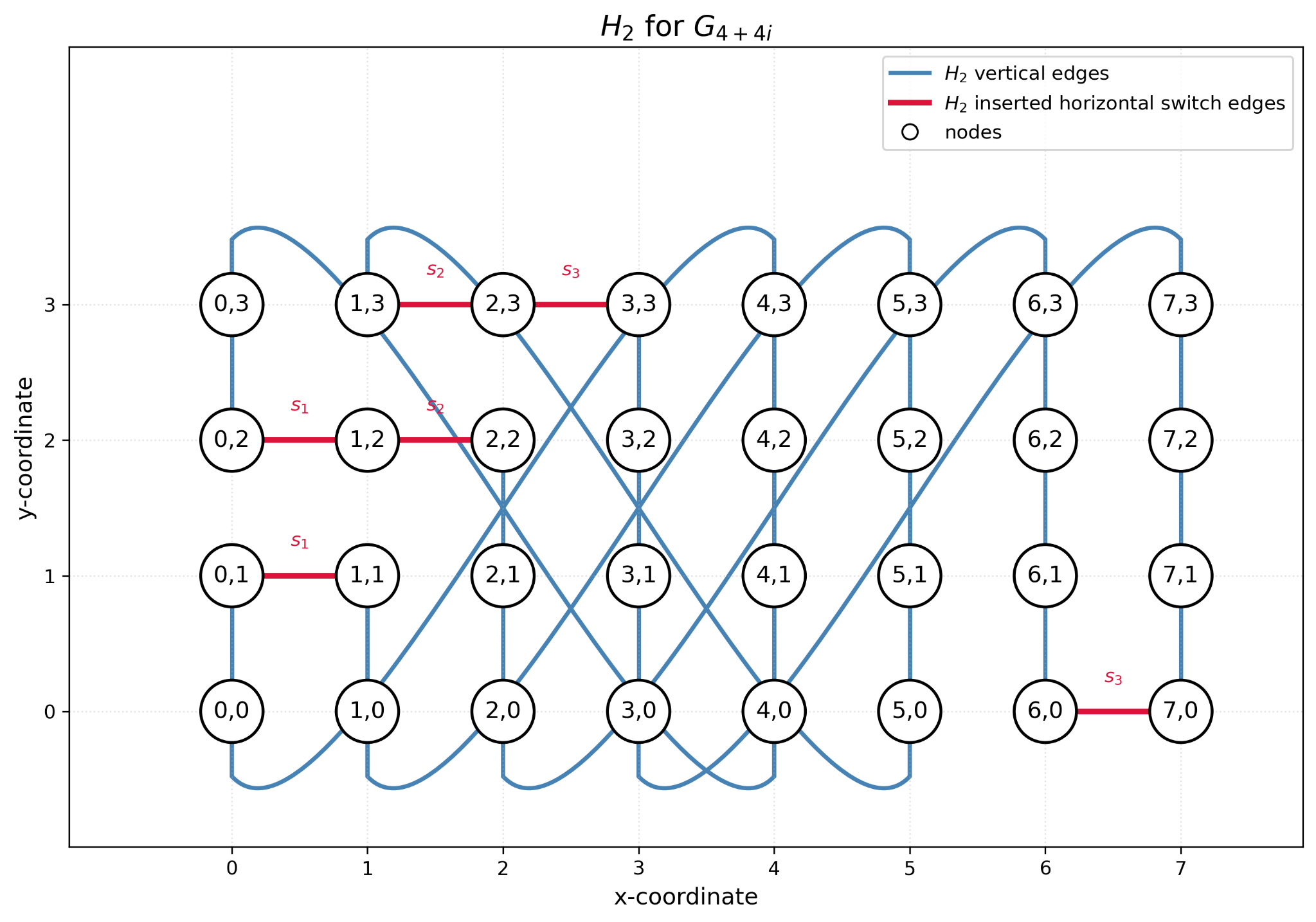}\\[-1mm]
		\textbf{(b)} $H_2$
	\end{minipage}
	\caption{The two Hamiltonian cycles for $G_{4+4i}$ shown separately.  (a) The cycle $H_1$ is obtained from the horizontal row-cycle structure by applying the switch operations; the red edges are the inserted vertical switch edges.  (b) The cycle $H_2$ is obtained from the complementary vertical structure; the red edges are the horizontal switch edges removed from $H_1$ and assigned to $H_2$.}
	\label{fig:example-4-4i-separate}
\end{figure}

\begin{figure}[H]
	\centering
	\includegraphics[width=0.95\textwidth]{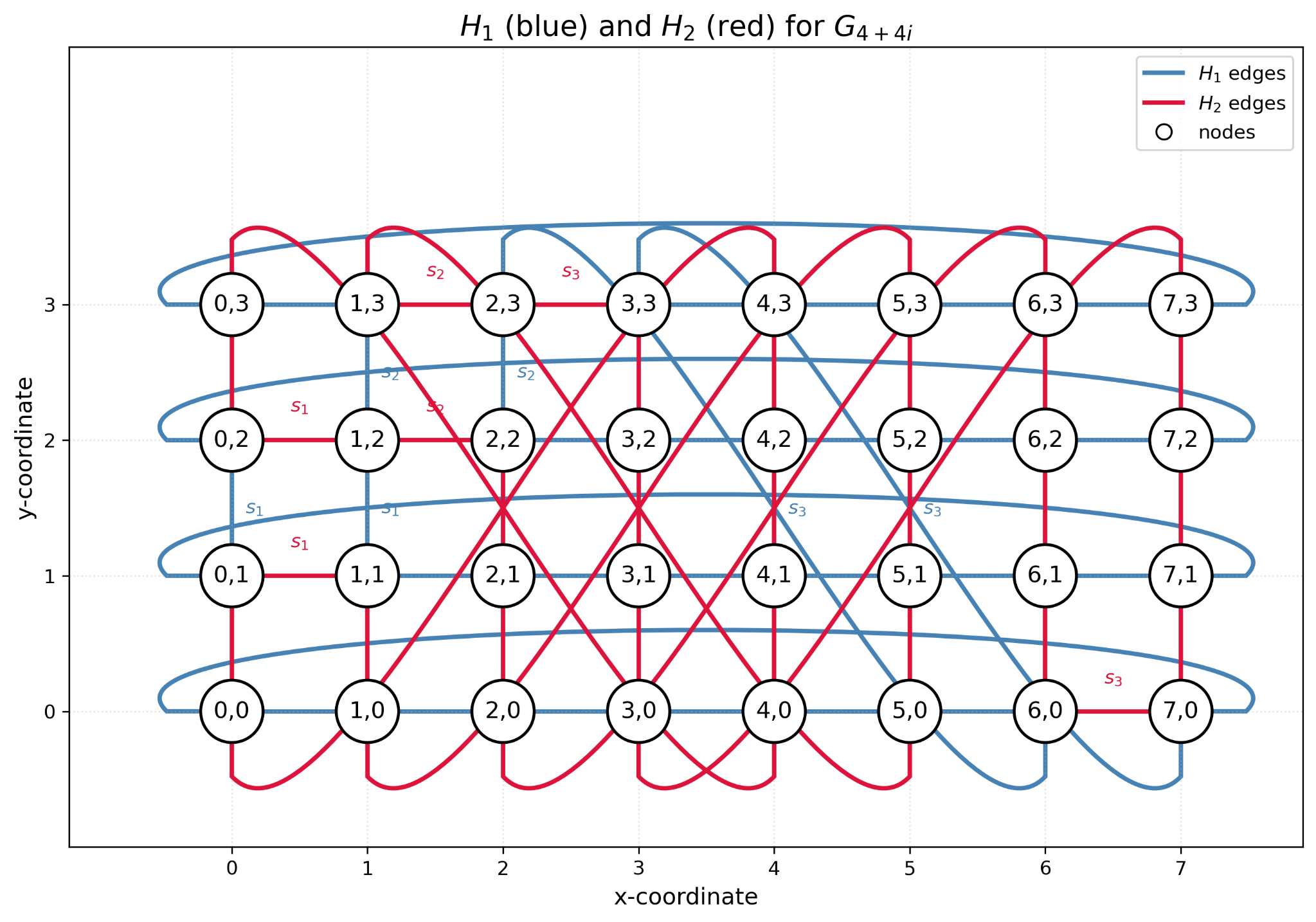}
	\caption{Combined drawing of the two edge-disjoint Hamiltonian cycles in $G_{4+4i}$.  The cycle $H_1$ is drawn in blue and the cycle $H_2$ is drawn in red.  The two cycles have no common graph edge, i.e., $E(H_1)\cap E(H_2)=\emptyset$.  Crossings in the drawing are visual crossings only and do not represent shared edges.}
	\label{fig:example-4-4i-combined}
\end{figure}

Figures~\ref{fig:example-4-4i-separate} and~\ref{fig:example-4-4i-combined} illustrate the complete construction for $G_{4+4i}$.  Figure~\ref{fig:example-4-4i-separate} shows the two Hamiltonian cycles separately: $H_1$ uses the switched horizontal structure, while $H_2$ uses the complementary vertical structure.  Figure~\ref{fig:example-4-4i-combined} draws the two cycles together, with $H_1$ in blue and $H_2$ in red.  The combined drawing emphasizes edge-disjointness: any crossings are geometric crossings in the drawing only and are not shared graph edges.

\subsection{Example 3: An Odd Non-Coprime Case $\alpha=5+10i$}

For $\alpha=5+10i$,
\begin{equation}
    d=5,\qquad N=125,\qquad r=25.
\end{equation}
The switch sequence is
\begin{equation}
    (q,a_q)=(1,0),(2,1),(3,2),(4,3).
\end{equation}
The validation gives
\begin{equation}
    |H_1|=|H_2|=125,
\end{equation}
and the edge decomposition is
\begin{equation}
    H_1:117\text{ horizontal}+8\text{ vertical},
\end{equation}
\begin{equation}
    H_2:8\text{ horizontal}+117\text{ vertical}.
\end{equation}

\section{Experimental Validation}
\label{sec:experiments}

A Python validation program was written to test the construction.  For each generator $\alpha=a+bi$, the program computes $d$, $r$, $n$, constructs the switch sets, verifies the degree of every vertex in $H_1$ and $H_2$, traverses each cycle, and confirms that each traversal visits all $N$ vertices exactly once.  It also records edge-type counts and runtime.  The degenerate network $\alpha=1+i$ with $N=2$ is excluded because the four formal directions collapse to a single undirected edge.

The validator checks four correctness conditions for every nondegenerate input.  First, every vertex has degree two in $H_1$ and degree two in $H_2$.  Second, a traversal of $H_1$ returns one cycle containing all $N$ vertices, and the same condition is checked independently for $H_2$.  Third, the edge sets are tested for disjointness, i.e., $E(H_1)\cap E(H_2)=\emptyset$.  Finally, the measured horizontal and vertical edge counts are compared with the closed-form counts in Corollary~\ref{cor:edge-decomposition}.

\subsection{Validation Scope}

The validation included:
\begin{itemize}
    \item 18 representative cases covering $d=1$, $d>1$, odd $d$, even $d$, small networks, and large networks;
    \item exhaustive validation for all $1\leq a\leq b\leq 100$;
    \item large-scale validation up to $N=3{,}250{,}000$.
\end{itemize}

\begin{table}[H]
\caption{Representative validation of the closed-form EDHC construction.}
\label{tab:representative-validation}
\centering
\footnotesize
\begin{tabular}{c|r|r|r|r|r|c}
\hline
$\alpha$ & $d$ & $r$ & $N$ & $|H_1|$ & $|H_2|$ & Result\\
\hline
$3+5i$ & 1 & 34 & 34 & 34 & 34 & Yes\\
$4+4i$ & 4 & 8 & 32 & 32 & 32 & Yes\\
$3+6i$ & 3 & 15 & 45 & 45 & 45 & Yes\\
$5+10i$ & 5 & 25 & 125 & 125 & 125 & Yes\\
$6+6i$ & 6 & 12 & 72 & 72 & 72 & Yes\\
$6+8i$ & 2 & 50 & 100 & 100 & 100 & Yes\\
$8+8i$ & 8 & 16 & 128 & 128 & 128 & Yes\\
$8+12i$ & 4 & 52 & 208 & 208 & 208 & Yes\\
$10+15i$ & 5 & 65 & 325 & 325 & 325 & Yes\\
$12+12i$ & 12 & 24 & 288 & 288 & 288 & Yes\\
$15+20i$ & 5 & 125 & 625 & 625 & 625 & Yes\\
$20+30i$ & 10 & 130 & 1300 & 1300 & 1300 & Yes\\
$25+35i$ & 5 & 370 & 1850 & 1850 & 1850 & Yes\\
$50+50i$ & 50 & 100 & 5000 & 5000 & 5000 & Yes\\
$60+80i$ & 20 & 500 & 10000 & 10000 & 10000 & Yes\\
$100+100i$ & 100 & 200 & 20000 & 20000 & 20000 & Yes\\
$100+150i$ & 50 & 650 & 32500 & 32500 & 32500 & Yes\\
$200+300i$ & 100 & 1300 & 130000 & 130000 & 130000 & Yes\\
\hline
\end{tabular}
\end{table}

\begin{table}[H]
\caption{Edge-type decomposition of the two Hamiltonian cycles.}
\label{tab:edge-decomposition}
\centering
\footnotesize
\begin{tabular}{c|r|r|r|r|r|r}
\hline
$\alpha$ & $d$ & $H_1^h$ & $H_1^v$ & $H_2^h$ & $H_2^v$ & $2(d-1)$\\
\hline
$3+5i$ & 1 & 34 & 0 & 0 & 34 & 0\\
$4+4i$ & 4 & 26 & 6 & 6 & 26 & 6\\
$3+6i$ & 3 & 41 & 4 & 4 & 41 & 4\\
$5+10i$ & 5 & 117 & 8 & 8 & 117 & 8\\
$6+6i$ & 6 & 62 & 10 & 10 & 62 & 10\\
$6+8i$ & 2 & 98 & 2 & 2 & 98 & 2\\
$8+8i$ & 8 & 114 & 14 & 14 & 114 & 14\\
$8+12i$ & 4 & 202 & 6 & 6 & 202 & 6\\
$10+15i$ & 5 & 317 & 8 & 8 & 317 & 8\\
$12+12i$ & 12 & 266 & 22 & 22 & 266 & 22\\
$20+30i$ & 10 & 1282 & 18 & 18 & 1282 & 18\\
$50+50i$ & 50 & 4902 & 98 & 98 & 4902 & 98\\
$100+100i$ & 100 & 19802 & 198 & 198 & 19802 & 198\\
$200+300i$ & 100 & 129802 & 198 & 198 & 129802 & 198\\
\hline
\end{tabular}
\end{table}

\begin{table}[H]
\caption{Aggregate correctness summary.}
\label{tab:aggregate-summary}
\centering
\footnotesize
\begin{tabular}{l|r|r|r|r}
\hline
Category & Cases & Checked & Successes & Failures\\
\hline
All & 5076 & 5075 & 5075 & 0\\
Exhaustive $1\leq a\leq b\leq 100$ & 5050 & 5049 & 5049 & 0\\
Large scale & 8 & 8 & 8 & 0\\
Representative & 18 & 18 & 18 & 0\\
$d=1$ & 3045 & 3044 & 3044 & 0\\
$d>1$ & 2031 & 2031 & 2031 & 0\\
Odd $d$ & 3781 & 3780 & 3780 & 0\\
Even $d$ & 1295 & 1295 & 1295 & 0\\
\hline
\end{tabular}
\end{table}

\subsection{Interpretation of the Experiments}

The exhaustive test contains $5050$ generator pairs with $1\leq a\leq b\leq 100$.  The only excluded pair is $\alpha=1+i$, where $N=2$.  Therefore, $5049$ nondegenerate exhaustive cases were checked and all passed.  The validation also separately confirms the important categories $d=1$, $d>1$, odd $d$, and even $d$.  The large-scale tests show linear behavior, because the validation time grows approximately proportionally with $N$.

\begin{table}[H]
\caption{Runtime scalability of the validation procedure.}
\label{tab:runtime-scalability}
\centering
\footnotesize
\begin{tabular}{c|r|r|r|r}
\hline
$\alpha$ & $d$ & $N$ & Total ms & $\mu$s/node\\
\hline
$100+100i$ & 100 & 20000 & 521.907 & 26.095350\\
$100+150i$ & 50 & 32500 & 892.941 & 27.475108\\
$200+200i$ & 200 & 80000 & 2223.681 & 27.796012\\
$200+300i$ & 100 & 130000 & 5310.403 & 40.849254\\
$500+500i$ & 500 & 500000 & 16605.924 & 33.211848\\
$500+750i$ & 250 & 812500 & 26680.321 & 32.837318\\
$1000+1000i$ & 1000 & 2000000 & 68336.990 & 34.168495\\
$1000+1500i$ & 500 & 3250000 & 112296.254 & 34.552694\\
\hline
\end{tabular}
\end{table}

\section{Discussion}
\label{sec:discussion}

The construction has two useful interpretations.  First, it is a unifying theorem: the same rule covers both the old coprime case and the non-coprime case.  Second, it is an algorithmic proof: the switches are not found by search and are not described by long tables.  They are given by one formula.

The key design choice is to omit the row adjacency $0\leftrightarrow1$ from the horizontal-cycle splicing and instead include the wrap adjacency $(d-1)\leftrightarrow0$.  This makes $H_1$ join rows through
\begin{equation}
    0-(d-1)-(d-2)-\cdots-1,
\end{equation}
while the complement $H_2$ joins vertical cycle indices through
\begin{equation}
    0-1-2-\cdots-(d-1).
\end{equation}
Thus, one set of switches creates two different spanning splice trees: one for the horizontal cycles and one for the vertical cycles.

\section{Conclusion}
\label{sec:conclusion}

This paper presented a unified constant-time switch rule for constructing two edge-disjoint Hamiltonian cycles in Gaussian networks.  The construction covers the known coprime case $d=1$ and the non-coprime case $d>1$ in a single framework.  It also removes the need for separate odd/even $d$ cases.  The rule is simply to switch row adjacency $q$ at column $a_q=q-1$ for $q=1,\ldots,d-1$.  The switched horizontal graph is one Hamiltonian cycle, and its edge-complement is the second Hamiltonian cycle; the two cycles are edge-disjoint by construction.

The construction is compact, implementable, and linear-time for explicit cycle generation.  Exhaustive and large-scale validation confirmed the result for all tested nondegenerate Gaussian networks, including all generators with $1\leq a\leq b\leq 100$ and large networks up to $3{,}250{,}000$ vertices.

\section*{Funding}
This research received no external funding.

\section*{Data Availability}
The validation data and source code generated for this study are available from the corresponding author upon reasonable request and can be provided as supplementary material during peer review.

\section*{Conflicts of Interest}
The author declares no conflicts of interest.

\section*{Acknowledgments}
The author would like to acknowledge the support of Kuwait University and its Computer Science Department and would like to thank the researchers whose previous work on Gaussian networks and edge-disjoint Hamiltonian cycles motivated this unified construction.

\end{document}